\documentclass[pra,reprint,amssymb,superscriptaddress,longbibliography]{revtex4-1}

\usepackage{amsthm}
\usepackage{amsmath}
\usepackage{graphicx}

\usepackage{physics}

\usepackage{hyperref}
\hypersetup{colorlinks=true,linkcolor=blue,citecolor=blue,urlcolor=blue}

\usepackage{tabularx}
\usepackage{multirow}

\usepackage[math]{cellspace}
\cellspacetoplimit 4pt
\cellspacebottomlimit 4pt

\newcommand*\diff{\mathop{}\!\mathrm{d}}
\newtheorem{theorem}{Theorem}

\begin{document}

\title{Overarching framework between Gaussian quantum discord and Gaussian quantum illumination}

\author{Mark~Bradshaw}
\affiliation{Centre for Quantum Computation and Communication Technology, Department of Quantum Science,\\ Research School of Physics and Engineering, Australian National University, Canberra ACT 2601, Australia}
\author{Syed~M.~Assad}
\affiliation{Centre for Quantum Computation and Communication Technology, Department of Quantum Science,\\ Research School of Physics and Engineering, Australian National University, Canberra ACT 2601, Australia}
\author{Jing~Yan~Haw}
\affiliation{Centre for Quantum Computation and Communication Technology, Department of Quantum Science,\\ Research School of Physics and Engineering, Australian National University, Canberra ACT 2601, Australia}
\author{Si-Hui~Tan}
\affiliation{Singapore University of Technology and Design, 8 Somapah Road, Republic of Singapore}
\author{Ping~Koy~Lam}
\affiliation{Centre for Quantum Computation and Communication Technology, Department of Quantum Science,\\ Research School of Physics and Engineering, Australian National University, Canberra ACT 2601, Australia}
\author{Mile~Gu}
\affiliation{School of Physical and Mathematical Sciences, Nanyang Technological University, Singapore 639673, Republic of Singapore}
\affiliation{Complexity Institute, Nanyang Technological University, Singapore 639673, Republic of Singapore}
\affiliation{Centre for Quantum Technologies, National University of Singapore, 3 Science Drive 2, Singapore, Republic of Singapore}

\date{\today}

\begin{abstract}
We cast the problem of illuminating an object in a noisy environment
into a communication protocol. A probe is sent into the environment,
and the presence or absence of the object constitutes a signal encoded
on the probe. The probe is then measured to decode the signal. We calculate the Holevo information and bounds to the accessible information between the encoded and received signal with two different
Gaussian probes---an Einstein-Podolsky-Rosen (EPR) state and a coherent state.
We also evaluate the Gaussian discord consumed during the encoding
process with the EPR probe. We find that the Holevo quantum
advantage, defined as the difference between the Holevo information
obtained from the EPR and coherent state probes, is approximately equal
to the discord consumed. These quantities become exact in the typical
illumination regime of low object reflectivity and low probe energy.
Hence we show that discord is the resource responsible for the quantum
advantage in Gaussian quantum illumination.
\end{abstract}

\maketitle

\section{Introduction}

Quantum illumination is a simple target-detection scheme, first proposed by Lloyd for photonic qubits \cite{Lloyd1463}. It harnesses entanglement in a quantum state of light to better infer the presence or absence of a weakly reflecting object flooded by white noise. The protocol distinguished itself in displaying quantum advantage, even in regimes so noisy that no entanglement survives. It presented a remarkable deviation from the conventional view that quantum technologies are fragile, displaying advantage only in carefully engineered environments which ensure little or no loss of entanglement. Since its original inception, quantum illumination has gained significant scientific interest. Many variants have been proposed, including some that make use of Gaussian states in the continuous-variable regime~\cite{PhysRevLett.101.253601,SHGSC16,PhysRevA.80.052310} and inspiring a number of different experimental realizations \cite{PhysRevLett.110.153603, PhysRevLett.111.010501,PhysRevLett.114.110506, PhysRevLett.114.080503}. The phenomenon has also seen applications outside metrology, where quantum illumination has been harnessed to provide security against passive eavesdropping in the setting of secure communication~\cite{PhysRevA.80.022320}.

Quantum illumination challenges the conventional view that entanglement alone can explain all quantum advantage. It joins a particularly surprising class of protocols that appear to thrive in noisy, possibly entanglement-breaking environments~\cite{PhysRevLett.81.5672, PhysRevLett.101.200501}. What other quantum resources, then, could help us better understand its noisy resilience?  Quantum discord \cite{0305-4470-34-35-315, PhysRevLett.88.017901, RevModPhys.84.1655}, which quantifies correlations beyond entanglement, is considered a likely candidate. Unlike entanglement, discord is far more robust and can also survive in highly noisy conditions \cite{PhysRevLett.100.050502}. In fact, Weedbrook {\it et al.}\ have shown such a relation for discrete variables~\cite{weedbrook2016discord}. Specifically, they showed that the performance advantage of quantum illumination---in terms of extra accessible information about whether an object is present---can be directly related to the amount of discord in the illumination protocol that survives after being subjected to entanglement-breaking noise. Does a similar relationship hold for continuous variables?

The aim of this work is to answer that question. We extend the framework relating discord and illumination to the continuous variable regime. This involves understanding how these relations generalize when a number of conditions specific to the discrete scenario no longer hold. The paper is organized as follows. In Sec.\ \ref{sec_framework} we describe the illumination protocol and the quantifiers of performance. In Sec.\ \ref{sec_discord} we describe discord and how it relates to quantum illumination. In Sec.\ \ref{sec_results} we present and discuss our results, demonstrating that there is a general relationship between discord and the quantum advantage of illumination in the continuous-variable regime.

\section{The illumination framework}
\label{sec_framework}

\begin{figure}[h]
\begin{center}
\includegraphics[width=7.4cm]{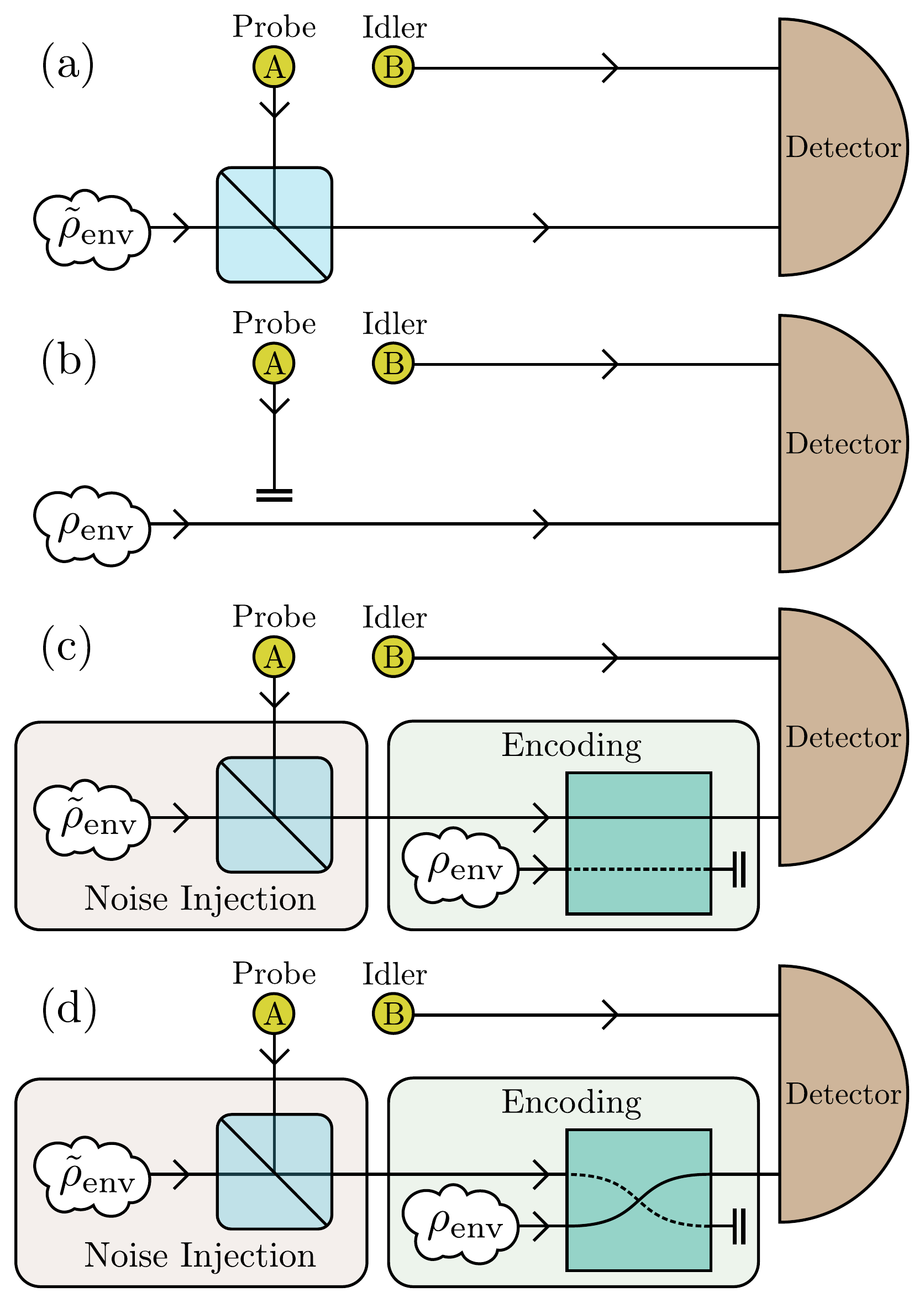}
\caption{ \textbf{Diagram of illumination setup.} (a) With probability $p_0$, there is an object located in a noisy environment. The object is partially reflective (modeled as a beam splitter with reflectivity $\epsilon$). A probe is sent towards the object. The probe is mixed with the noisy environment, and reflected to the detector. (b) With probability $p_1$, an object is not present, in which case there is nothing to reflect the probe to the detector. Hence, only noise is detected. (c) An equivalent description of illumination whereby first noise is injected. Then, encoding is performed on the probe, whereby with a probability $p_0$, an identity operation is performed on the probe (after noise injection) and the environment noise, and (d) with probability $p_1$, a swap operation is performed on the probe and environment. In quantum illumination, we also have an idler initially entangled with the probe which is used to perform a joint measurement. Single-mode illumination is when there is no idler. $\rho_{\rm env}$ is the noisy environment and $\tilde{\rho}_{\rm env}$ is the environment with the mean photon number scaled by $1/(1-\epsilon)$. }
\label{fig_swapdiagram}
\end{center}
\end{figure}

\subsection{Setup}
The illumination framework is described as follows: Bob wishes to determine whether an object is located in a noisy environment. He sends a quantum state, referred to as the probe, to the location. If an object is present, part of the probe will be reflected back to Bob, along with some background noise. If the object is not present, Bob receives only the background noise. Bob may have another state called the idler, which was initially correlated with the probe. 

If the probe and idler are quantum correlated (have a non-zero quantum discord), the scheme is called \emph{quantum illumination}. If there is no idler, it is called \emph{single-mode illumination}.

A diagram of illumination is shown in Figs.~\ref{fig_swapdiagram}(a) and ~\ref{fig_swapdiagram}(b). Bob performs a joint measurement on the idler and returning probe, and uses the results of the measurement to determine whether an object was present. For brevity in notation in the rest of the paper, modes A and B will label the probe and idler parts of the state, respectively.

We are interested in quantum illumination in the continuous-variable setting, where the probe and idler are Gaussian states. For single-mode illumination, Bob uses a coherent state $\rho_\alpha$, where $\alpha$ is its amplitude. For quantum illumination,  Bob uses an Einstein-Podolsky-Rosen (EPR) state described by $\rho_{\rm EPR}=\ket{\psi_{\rm EPR}}\bra{\psi_{\rm EPR}}
$, where
\begin{align}\label{eqn_epr}
\ket{\psi_{\rm EPR}}&=\sqrt{1-\lambda^2}\sum_{n=0}^\infty(-\lambda)^n\ket{n}_\mathrm{A}\ket{n}_\mathrm{B} \ .
\end{align}
where $\lambda=\tanh(r)$, and $r$ is the squeezing parameter.

Illumination can also be recast as a communication protocol. Let us suppose that Alice is in control of the object and she would like to communicate with Bob. She can do so by encoding a binary alphabet via the control of the object, such as in the Morse code. The message she sends to Bob can be described by realizations of a random variable $X$, where if $X=0$, Alice places the object in the noisy environment, and if $X=1$, Alice removes the object. Let $p_x$ be the prior probability that $X=x$, and let $p_0=p_1$, i.e., let both hypotheses be equally likely to occur. Let $\rho^{(x)}$ denote the state received by Bob when $X=x$. Noise is injected into the probe state before Alice encodes the value of $X$. This is shown diagrammatically in Figs.~\ref{fig_swapdiagram}(c) and \ref{fig_swapdiagram}(d). We model the object as a beam splitter with reflectivity $\epsilon$. The environment-noise state $\rho_{\rm env}$ is a thermal state with mean photon number $\bar n_{\rm env}$, where $\rho_{\rm env}(\bar{n})=\sum_{n=0}^\infty\frac{\bar{n}^n}{(\bar{n}+1)^{n+1}}\ket{n}\bra{n}$. When the object is present, the environment noise is multiplied by a factor of $1/(1-\epsilon)$ such that the mean number of noise photons arriving at the detector is the same as when the object is absent. This approach has been adopted by \cite{PhysRevLett.101.253601} to avoid a ``shadowing effect''---so that the object is not detected by a reduction in the number of noise photons arriving at the detector. The typical illumination scenario that has greatest quantum advantage is for the regime of low object reflectivity and high noise, i.e., $\epsilon\ll 1$ and $\bar n \ll \bar n_{\rm env}$, where $\bar n$ is the mean photon number of the probe. We term this as the intense white-noise limit.

Consider Figs.\ \ref{fig_swapdiagram}(c) and \ref{fig_swapdiagram}(d). After the noise injection, the entanglement is reduced or lost all together, before any information is encoded within the probe. In fact, for all the settings studied in Sec.\ \ref{sec_results}, the entanglement after noise injection is strictly zero. Nevertheless we see a quantum advantage. Thus, quantum entanglement itself does not give a complete picture on why illumination thrives in such noise. Our goal here is to see if discord will give us additional insight.

In the next section, we will use the communication formalism to study the amount of information that Alice can communicate to Bob under different settings. This provides a measure for assessing the performance of illumination under these settings.

\subsection{Quantifiers of performance}

We consider two quantifiers of performance of illumination: the accessible information and Holevo information.

Let $\mathcal{M} = {\{E_k\}}$ be a positive operator-valued measure (POVM) that mathematically represents a measurement. The POVM elements $E_k$ are non-negative, self-adjoint operators satisfying $\sum_k E_k = 1$, where the subscript $k$ labels the outcome of the measurement. The probability of the measurement outcome $k$ on a state $\rho^{(x)}$ is then given by $q_k^{(x)}=\Tr (\rho^{(x)} E_k)$. Let this be governed by random variable $K_{\mathcal{M}}$.
In the communication setting described in the last section, the amount of information obtained by Bob after measurement of the state $\rho^{(x)}$ is given by the mutual information,
\begin{equation}
\label{eqn_mutual_int}
I_{\rm mut}(X,K_{\mathcal{M}})
 = \sum_k \sum_{x=0}^1 p_x q^{(x)}_k \log_2 (\frac{q^{(x)}_k}{q_k}),
\end{equation}
where $q_k=\sum_{x=0}^1 p_x q^{(x)}_k$. The accessible information is the maximization of the mutual information over all POVMs,
\begin{equation}
\label{eqn_accessible}
A\left(\rho^{(0)},\rho^{(1)}\right) = \max_{\mathcal{M}}I_{\rm mut}\left(X,K_{\mathcal{M}}\right).
\end{equation}
The accessible information quantifies Bob's knowledge when each $\rho^{(x)}$ from $N$ trials is measured separately using an optimal POVM. In the context of communication, illumination can be regarded as classical information exchange over a noisy channel. By the Shannon's noisy-channel coding theorem~\cite{6773024}, Alice and Bob communicate at a rate equal to the accessible information in the limit of infinite message size $N$.

There is no known general method for calculating the accessible information exactly. Here we will make use of the upper and lower bounds found by Fuchs and Caves~\cite{PhysRevLett.73.3047}. The lower bound, hereby referred to as the Fuch's lower bound, is
\begin{multline}
I_{\rm lower}=\Tr\Big\{p_0\rho^{(0)}\log_2\left[ \mathcal{L}_{\bar\rho}(\rho^{(0)}) \right]\\
 + p_1\rho^{(1)}\log_2\left[ \mathcal{L}_{\bar\rho}(\rho^{(1)})\right] \Big\}
\label{eqn_fuchslower}
\end{multline}
where  $\mathcal{L}$ is the lowering superoperator given by
\begin{multline}
\mathcal{L}_{\bar\rho}(\Delta)= \sum_{\{j,k|\lambda_j+\lambda_k\neq 0\}} \bigg[ \frac{2}{\lambda_j(p_1)+\lambda_k(p_1)} \\ \times \bra{\psi_j(p_1)}\Delta\ket{\psi_j(p_1)}
 \ket{\psi_j(p_1)}\bra{\psi_k(p_1)}\bigg] \ ,
\end{multline}
and where $\Delta=\rho^{(1)}-\rho^{(0)}$. $\lambda_i(p_1)$ and $\ket{\psi_i(p_1)}$ are the eigenvalues and eigenvectors of $\bar\rho=(1-p_1)\rho^{(0)}+p_1\rho^{(1)}$.
The Fuchs upper bound $I_{\rm upper}$ is found by numerically solving the differential equation,
\begin{multline}
\label{eqn_fuchsupper}
\frac{\diff^2 I_{\rm upper}(p_1)}{\diff p_1^2}=
 \sum_{\{ j,k|\lambda_j+\lambda_k\neq 0 \}} \bigg[ -\frac{2}{\lambda_j(p_1)+\lambda_k(p_1)} \\
 \times |\bra{\psi_j(p_1)}\Delta\ket{\psi_k(p_1)}|^2 \bigg]
\end{multline}
subject to
\begin{equation}
I_{\rm upper}(0)=I_{\rm upper}(1)=0.
\end{equation}

The other figure of merit we consider is the Holevo information~\cite{holevo1973bounds}. It is given by
\begin{multline}
\label{eqn_holevo}
\chi (\rho^{(0)}, \rho^{(1)}) = S\left(\sum_{x=0}^1 p_x \rho^{(x)}\right) - \sum_{x=0}^1 p_x S(\rho^{(x)})
\end{multline}
where $S(\rho)$ is the von Neumann entropy of the quantum state $\rho$. The Holevo information is the maximum communication rate Bob can obtain, provided he stores all of the $N$ states and then performs a joint measurement upon all of the states. From the Holevo-Schumacher-Westmoreland theorem~\cite{Schumacher97sendingclassical, 651037}, this information rate is obtainable when $N\rightarrow \infty$.

\subsection{Three cases of illumination and quantum advantage}
\label{sec_three_cases}
Three cases, together with three pairs of accessible information and Holevo information, are relevant for our assessment of the illumination scheme [Fig.\ \ref{fig_swapdiagram}(a)] in the communication framework. They are as follows:\\

\textit{Case 1}.\ Quantum illumination with joint measurement: $A_q$ and $\chi_q$ are the accessible information and Holevo information, respectively, for Bob when two-mode EPR states are used as probes and idlers for illumination. Any arbitrary joint measurement over the two modes is allowed.\\

\textit{Case 2}.\ Quantum illumination with local measurements: $A_c$ and $\chi_c$ are the average accessible information and Holevo information for Bob with EPR state as the probe and idler, under the restriction that Bob must perform the optimal Gaussian local measurement on mode B, followed by an arbitrary local measurement on mode A. The measurement on mode B is optimal in the sense that it maximizes the amount of accessible information or Holevo information Bob receives. In this case, Bob only takes advantage of the classical correlations of the EPR state. This enables a direct comparison to case 1, when both quantum and classical correlations are utilized. \\

\textit{Case 3}.\ Single-mode illumination: $A_s$ and $\chi_s$ are the accessible information and Holevo information, respectively, when Bob uses a single-mode coherent state with a fixed amplitude $\alpha$ as the illumination probe. \\

The quantum advantage is defined as the difference between the performance of quantum illumination and single-mode illumination protocol. The protocols are compared for scenarios where the probe states have coinciding energy. This constraint allows for fair comparison, as it is always possible to detect the presence of an object with any fixed accuracy by using a sufficiently energetic probe. The quantum advantage in terms of accessible information is $A_q-A_s$ and the Holevo information quantum advantage is $\chi_q-\chi_s$, where each information quantity is evaluated over the probe with mean photon number $\bar n$. As we shall show in this paper, these quantum advantages can be linked to the discord consumed in the illumination protocol.

\section{Discord and Quantum Illumination}
\label{sec_discord}
Quantum discord is a measure of the nonclassical correlations between two quantum states. It arises from the difference between quantum analogs of two distinct definitions of the classical mutual information~\cite{0305-4470-34-35-315, PhysRevLett.88.017901}:
\begin{align}
I({\rm A}:{\rm B})&=S({\rm A})+S({\rm B})-S({\rm AB}) \\
J({\rm A}|{\rm B})&=S({\rm A})-\min_{\{\Pi_b\}}\sum p_b S({\rm A}|b)
\end{align}
where $\Pi_b$ is the POVM element corresponding to the outcome $b$, $p_b$ is the probability of that outcome, and $S({\rm A}|b)$ is the entropy of the state conditioned on the outcome $b$. The discord is then
\begin{align}
\label{eq_discord}
\delta({\rm A}|{\rm B}) &= I({\rm A}:{\rm B})-J({\rm A}|{\rm B}) \nonumber \\
&= S({\rm B}) - S({\rm AB}) + \min_{\{\Pi_b\}}\sum p_b S({\rm A}|b) \ ,
\end{align}
where the minimization is done over all possible POVMs on mode B. In the special case that the domain of this minimization is restricted to Gaussian measurements, the discord is known as the Gaussian discord \cite{PhysRevLett.105.020503, PhysRevLett.105.030501}. It was recently shown that for a large class of Gaussian states, Gaussian quantum discord is equal to quantum discord~\cite{PhysRevLett.113.140405}. Henceforth, we denote the Gaussian discord $\delta^{\rm G}({\rm A}|{\rm B})$ with a superscript G.

We now consider the evolution of the discord when quantum illumination is described by Figs.\ \ref{fig_swapdiagram}(c) and \ref{fig_swapdiagram}(d). After the noise-injection step, Alice is left with state $\rho$ with which she can encode information to send to Bob. We note that this state may have no entanglement due to the noise injection~\cite{PhysRevLett.101.253601}. Alice encodes the value of $X$ on the state by performing the operation $O_x$ on $\rho$, resulting in a state $\rho^{(x)}=O_x(\rho)$ with discord $\delta^{(x)}({\rm A}|{\rm B})$.

Let us decompose the discord of $\rho$, $\delta({\rm A}|{\rm B})$ into three components:
\begin{equation}
\label{eq_discord_decomp}
\delta({\rm A}|{\rm B}) = \delta_{\rm loss} + \bar\delta({\rm A}|{\rm B}) + \delta_{\rm con}({\rm A}|{\rm B})
\end{equation}
The first component $\delta_{\rm loss}$ is the amount of discord lost to the environment during the encoding process. This can be evaluated by first defining
\begin{equation}
\delta_{\rm loss}^{(x)} = \delta({\rm A}|{\rm B}) - \delta^{(x)}({\rm A}|{\rm B})
\end{equation}
as the loss of discord for each possible value of $x$ that Alice can encode, and then taking the weighted average over the probability of encoding that $x$. This results in 
\begin{equation}
\delta_{\rm loss} = \sum_x p_x \delta_{\rm loss}^{(x)}
\end{equation}
The second component $\bar\delta({\rm A}|{\rm B})$ is the discord of $\bar\rho=p_0\rho^{(0)}+p_1\rho^{(1)}$, the state after encoding. This is the state seen by Bob who is oblivious to the value of $X$.

We term the remaining component the {\it consumed discord} $\delta_{\rm con}({\rm A}|{\rm B})$, and represents the discord in $\rho$ that remain unaccounted for. In prior literature, it was proposed to capture the amount of discord consumed to encode the value of $X$ on the state $\rho$~\cite{weedbrook2016discord}. For the special case where encodings were unitary, such that $\delta_{\rm loss}^{(x)} = 0$, $\delta_{\rm con}({\rm A}|{\rm B})$ was related to the advantage of using coherent interactions~\cite{Gu:2012aa}. It is also interesting to note that $\delta_{\rm con}({\rm A}|{\rm B})$ also coincides with the the extra discord Bob sees between $A$ and $B$, should he learn the value of $X$.

In quantum illumination, when $X=0$, Alice performs an identity operation, and thus $\delta^{(0)}({\rm A}|{\rm B})=\delta({\rm A}|{\rm B})$ and $\delta_{\rm loss}^{(0)} = 0$. When $X=1$, Alice performs a swap operation between mode A of $\rho$ with the environment noise, destroying all correlations between the two modes. All discord is lost and $\delta_{\rm loss}^{(1)}=\delta({\rm A}|{\rm B})$. Putting this together, the average discord loss is thus $\delta_{\rm loss}=p_1\delta({\rm A}|{\rm B})$. Hence the consumed discord for quantum illumination is
\begin{equation}
\label{eqn_denc}
\delta_{\rm con}(A|B)=p_0\delta^{(0)}({\rm A}|{\rm B})-\bar\delta({\rm A}|{\rm B}).
\end{equation}

\section{Method and Results}
\label{sec_results}

In Sec.\ \ref{sec_results_analytic}, we first derive a general result that if certain conditions are fulfilled, the discord consumed is equal to the Holevo information quantum advantage. In Sec.\ \ref{sec_results_info}, we numerically calculate the illumination information quantities. In Sec.\ \ref{sec_results_discord}, we numerically evaluate the consumed discord and compare it to the quantum advantages. Our main result is that for continuous variable quantum illumination, the consumed discord is approximately equal to the Holevo information quantum advantage.

\begin{figure*}
\includegraphics[width=15.5cm]{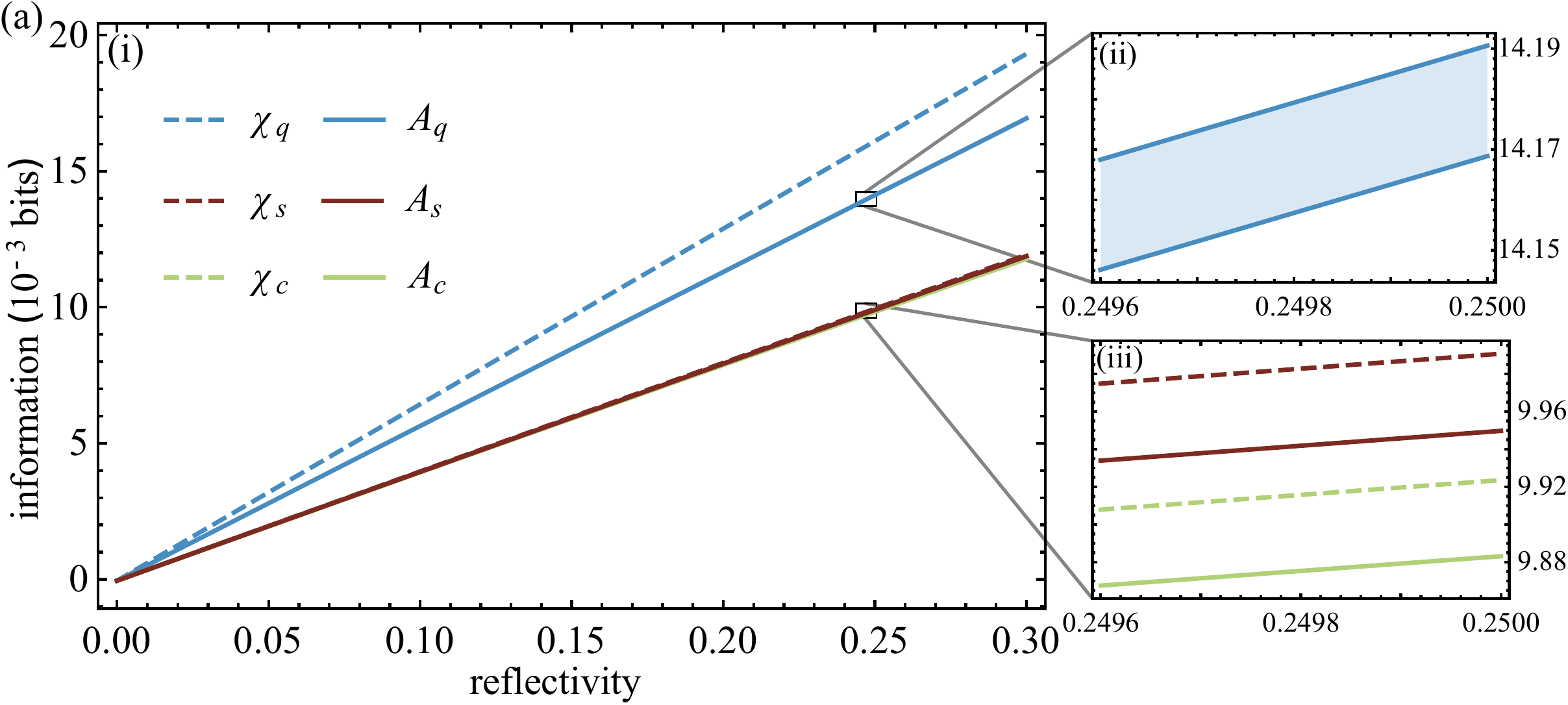}
\includegraphics[width=15.5cm]{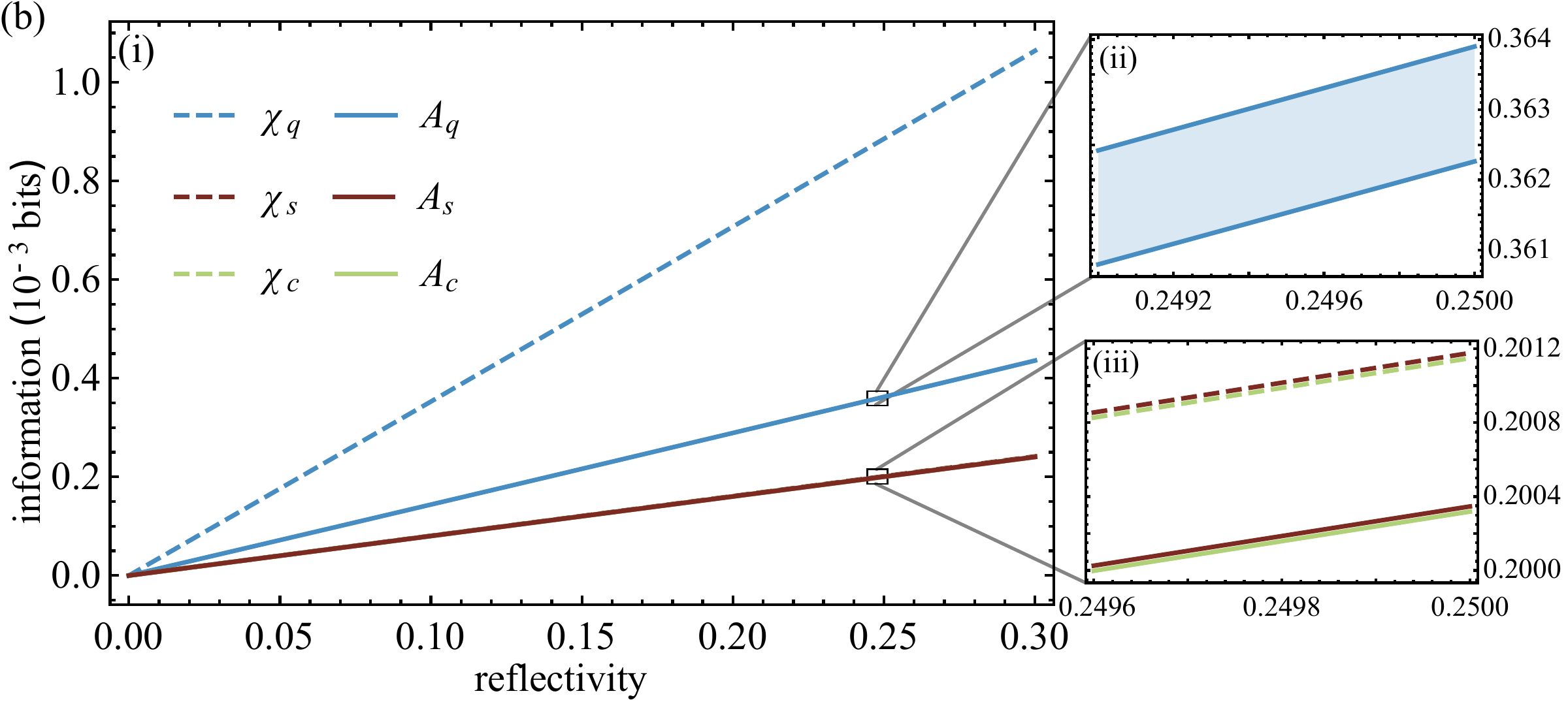}
\caption{Information vs object reflectivity $\epsilon$ when probe has mean photon number (a) $0.5 $ and (b) $0.01$. The environment noise has mean photon number $4$. Each plot has two insets showing zoomed portions. Insets (ii) show the upper and lower bounds for $A_q$, with the true value lying somewhere in the shaded region. Insets (iii) show that $\chi_s$, $\chi_c$, $A_s$, and $A_c$ differ slightly, despite appearing as a single line in the main plot. }
\label{fig_plot01}
\end{figure*}

\subsection{Analytic Result}
\label{sec_results_analytic}

We prove the following theorem.

\begin{theorem}
\label{theorem1}
Let $\rho_{{\rm AB}}^{(0)}$ and $\rho_{{\rm AB}}^{(1)}$ be two arbitrary two mode states. If the following conditions are met:
\begin{enumerate}
\item mode B is the same for both states, i.e., $\rho_{\rm B}^{(0)}=\rho_B^{(1)}$ where $\rho_{\rm B}^{(x)}=\Tr_A(\rho_{{\rm AB}}^{(x)})$ and where $\Tr_{\rm A}$ denotes the partial trace over subsystem A;
\item $\rho_{{\rm AB}}^{(1)}$ is a product state, i.e., $\rho_{{\rm AB}}^{(1)}=\rho_A^{(1)}\otimes\rho_B^{(1)}$; and
\item the Holevo information of local measurement $\chi_c$, the discord of $\bar \rho_{{\rm AB}}=p_0\rho_{{\rm AB}}^{(0)}+p_1\rho_{{\rm AB}}^{(1)}$, and the discord of $\rho_{{\rm AB}}^{(0)}$ are achieved by the same measurement,
\end{enumerate}
then $\delta_{\rm con}({\rm A}|{\rm B})=\chi_q-\chi_c$, where
\begin{align*}
\chi_q&=\chi(\rho_{{\rm AB}}^{(0)},\rho_{{\rm AB}}^{(1)})\\
\chi_c&=\max_{\{\Pi_b\}}\sum_b p_b \chi(\rho_{{\rm A}|b}^{(0)},\rho_{{\rm A}|b}^{(1)}),
\end{align*}
where $p_b$ is the probability of measuring outcome $\Pi_b$ on subsystem B, and $\rho_{{\rm A}|b}^{(x)}$ are the states of subsystem A conditioned on that outcome.
\end{theorem}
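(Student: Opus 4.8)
\emph{Proof strategy.}
The plan is to reduce everything to von Neumann entropies and then appeal to Eq.~\eqref{eqn_denc}, which already writes the consumed discord as $\delta_{\rm con}({\rm A}|{\rm B})=p_0\delta^{(0)}({\rm A}|{\rm B})-\bar\delta({\rm A}|{\rm B})$ (this is the right expression here because, by the decomposition in Eq.~\eqref{eq_discord_decomp} and $\delta_{\rm loss}=\sum_x p_x[\delta({\rm A}|{\rm B})-\delta^{(x)}({\rm A}|{\rm B})]$, one has $\delta_{\rm con}({\rm A}|{\rm B})=\sum_x p_x\delta^{(x)}({\rm A}|{\rm B})-\bar\delta({\rm A}|{\rm B})$, and condition~2 forces $\delta^{(1)}({\rm A}|{\rm B})=0$ since product states have zero discord). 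First I would fix the single POVM $\{\Pi_b\}$ which, by condition~3, simultaneously attains $\chi_c$, the discord of $\bar\rho_{\rm AB}$, and the discord of $\rho_{\rm AB}^{(0)}$. Condition~1 makes the outcome probability $p_b=\Tr(\Pi_b\rho_{\rm B}^{(x)})$ independent of $x$, so the post-measurement A-states obey $\rho_{{\rm A}|b}=p_0\rho_{{\rm A}|b}^{(0)}+p_1\rho_{{\rm A}|b}^{(1)}$, with $\rho_{{\rm A}|b}$ the conditional A-state of $\bar\rho_{\rm AB}$. Each term of $\chi_c=\sum_b p_b\chi(\rho_{{\rm A}|b}^{(0)},\rho_{{\rm A}|b}^{(1)})$ then expands as $S(\rho_{{\rm A}|b})-p_0 S(\rho_{{\rm A}|b}^{(0)})-p_1 S(\rho_{{\rm A}|b}^{(1)})$.

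Next I would evaluate the three sums over $b$ separately. Because $\{\Pi_b\}$ attains the discord of $\bar\rho_{\rm AB}$, Eq.~\eqref{eq_discord} gives $\sum_b p_b S(\rho_{{\rm A}|b})=\bar\delta({\rm A}|{\rm B})+S(\bar\rho_{\rm AB})-S(\bar\rho_{\rm B})$; similarly $\sum_b p_b S(\rho_{{\rm A}|b}^{(0)})=\delta^{(0)}({\rm A}|{\rm B})+S(\rho_{\rm AB}^{(0)})-S(\rho_{\rm B}^{(0)})$. For the last sum I would use condition~2: since $\rho_{\rm AB}^{(1)}=\rho_{\rm A}^{(1)}\otimes\rho_{\rm B}^{(1)}$, a measurement of B does not disturb A, so $\rho_{{\rm A}|b}^{(1)}=\rho_{\rm A}^{(1)}$ for every $b$ and $\sum_b p_b S(\rho_{{\rm A}|b}^{(1)})=S(\rho_{\rm A}^{(1)})$.

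Substituting back and using conditions~1 and~2 once more finishes it: condition~1 gives $S(\bar\rho_{\rm B})=S(\rho_{\rm B}^{(0)})=S(\rho_{\rm B}^{(1)})$, so the $S(\rho_{\rm B})$ contributions collapse to $-p_1 S(\rho_{\rm B}^{(1)})$, and condition~2 gives $S(\rho_{\rm AB}^{(1)})=S(\rho_{\rm A}^{(1)})+S(\rho_{\rm B}^{(1)})$, so $-p_1 S(\rho_{\rm A}^{(1)})-p_1 S(\rho_{\rm B}^{(1)})$ merges into $-p_1 S(\rho_{\rm AB}^{(1)})$. The surviving entropy terms are then precisely $S(\bar\rho_{\rm AB})-p_0 S(\rho_{\rm AB}^{(0)})-p_1 S(\rho_{\rm AB}^{(1)})=\chi_q$, leaving $\chi_c=\chi_q+\bar\delta({\rm A}|{\rm B})-p_0\delta^{(0)}({\rm A}|{\rm B})$, i.e.\ $\chi_q-\chi_c=p_0\delta^{(0)}({\rm A}|{\rm B})-\bar\delta({\rm A}|{\rm B})=\delta_{\rm con}({\rm A}|{\rm B})$ by Eq.~\eqref{eqn_denc}.

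The algebra is pure bookkeeping; the substantive points—and where I expect the argument needs the most care—are (i) invoking condition~3 to replace three independent optimizations by one common measurement, so that the three conditional-entropy sums can be matched term by term, and (ii) checking that condition~1 genuinely makes $p_b$ and the convex decomposition $\rho_{{\rm A}|b}=p_0\rho_{{\rm A}|b}^{(0)}+p_1\rho_{{\rm A}|b}^{(1)}$ independent of $x$. If either fails, the telescoping that reassembles $\chi_q$ from the entropy terms breaks down.
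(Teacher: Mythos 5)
Your proposal is correct and follows essentially the same route as the paper: fix the common measurement from condition~3, expand $\chi_c$, $\chi_q$, $\bar\delta({\rm A}|{\rm B})$, and $\delta^{(0)}({\rm A}|{\rm B})$ into von Neumann entropies, and use condition~1 (equal $p_b$ and $\rho_{\rm B}^{(0)}=\rho_{\rm B}^{(1)}=\bar\rho_{\rm B}$) together with condition~2 ($\delta^{(1)}({\rm A}|{\rm B})=0$, $S(\rho_{\rm AB}^{(1)})=S(\rho_{\rm A}^{(1)})+S(\rho_{\rm B}^{(1)})$, $\rho_{{\rm A}|b}^{(1)}=\rho_{\rm A}^{(1)}$) to cancel the remaining terms. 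The only difference is organizational—you expand $\chi_c$ and substitute the discord expressions, while the paper computes $\delta_{\rm con}({\rm A}|{\rm B})-(\chi_q-\chi_c)$ and shows it vanishes—which is the same bookkeeping.
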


\begin{proof}

Let $\{\Pi_b\}$ be the measurement in condition 3 that simultaneously optimizes $\chi_c$, as well as the discord of states $\bar \rho_{{\rm AB}}$ and $\rho_{{\rm AB}}^{(0)}$. The measurement outcome probability is
\begin{equation*}
p_b=\Tr((\Pi_b\otimes I)\rho_{{\rm AB}}^{(0)})=\Tr((\Pi_b\otimes I)\rho_{{\rm AB}}^{(1)}),
\end{equation*}
where we have used condition 1.
The resulting conditional states are
\begin{equation*}
\rho^{(x)}_{{\rm A}|b}=\frac{\Tr_{\rm B}(\Pi_b\rho_{{\rm AB}}^{(x)})}{p_b}.
\end{equation*}

Our goal is to prove $\delta_{\rm con}({\rm A}|{\rm B})=\chi_q-\chi_c$. Because of condition 2, $\delta^{(1)}({\rm A}|{\rm B})=0$, as so the consumed discord is
\begin{align*}
\delta_{\rm con}({\rm A}|{\rm B}) &= p_0\delta^{(0)}({\rm A}|{\rm B})-\bar\delta({\rm A}|{\rm B}) \\
&= p_0(S(\rho_{\rm B}^{(0)})-S(\rho_{{\rm AB}}^{(0)})+\sum_b p_bS(\rho^{(0)}_{{\rm A}|b})) \\
&-S(\bar\rho_{\rm B})+S(\bar\rho_{{\rm AB}})-\sum_b p_b S(\bar\rho_{{\rm A}|b}).
\end{align*}
We also have that:
\begin{align*}
\chi_q-\chi_c&=S(\bar\rho_{{\rm AB}})-p_0S(\rho_{{\rm AB}}^{(0)})-p_1S(\rho_{{\rm AB}}^{(1)}) \\
& +\sum_bp_b (-S(\bar\rho_{{\rm A}|b})+p_0S(\rho_{{\rm A}|b}^{(0)})+p_1S(\rho_{{\rm A}|b}^{(1)})).
\end{align*}
This leads to
\begin{align*}
&\delta_{\rm con}({\rm A}|{\rm B})-(\chi_q-\chi_c)\\
&=p_0 S(\rho_{\rm B}^{(0)}) -S(\bar\rho_{\rm B})+p_1S(\rho_{{\rm AB}}^{(1)}) -\sum_bp_bp_1S(\rho_{{\rm A}|b}^{(1)}).
\end{align*}
From condition 1, we have that $\rho_{\rm B}^{(0)}=\rho_{\rm B}^{(1)}=\bar\rho_{\rm B}$. From condition 2, $\rho_{{\rm AB}}^{(1)}$ is a product state, so $S(\rho_{{\rm AB}}^{(1)})=S(\rho_{\rm A}^{(1)})+S(\rho_{{\rm B}}^{(1)})$ and $\rho^{(1)}_{{\rm A}|b}=\rho^{(1)}_{\rm A}$. So this becomes
\begin{align*}
&\delta_{\rm con}({\rm A}|{\rm B})-(\chi_q-\chi_c)\\
&=S(\rho_{\rm B}^{(0)})(p_0-1+p_1)+S(\rho_{\rm A}^{(1)})(p_1-p_1)\\
&=0.
\end{align*}

\end{proof}

In continuous-variable quantum illumination, condition 1 is satisfied since the idler is not interacting with the illumination object. Condition 2 is met by the fact that the swap operation decorrelates mode A and mode B. By restricting ourselves to Gaussian quantum discord, together with the assumption that a Gaussian heterodyne measurement is the optimal measurement for the quantities in condition 3, we have $\delta^{\rm G}_{\rm con}({\rm A}|{\rm B})=\chi_q-\chi_c$. This assumption is justified by numerical results in the next sections.

\subsection{Accessible information and Holevo information calculations}
\label{sec_results_info}

The accessible information and Holevo information quantities $A_q$, $\chi_q$, $A_c$, $\chi_c$, $A_s$ and $\chi_s$ were calculated numerically for typical settings of quantum illumination. Due to finite computational resources, the states must be approximated to a Hilbert space with finite dimensions. Under this restriction, the highest noise mean photon number that does not result in significant error is $\bar n_{\rm env}=4$. Plots are shown in Fig.\ \ref{fig_plot01} of the information quantities for noise mean photon number $4$ and probe mean photon number $\bar n=(0.01,0.5)$. We will now review the information quantities for each case listed in Sec.\ \ref{sec_three_cases}.

\textit{Case 1}.\ The Holevo information $\chi_q$ and Fuchs upper and lower bounds for the accessible information $A_q$ for quantum illumination with joint measurement are shown in Fig.\ \ref{fig_plot01}. The difference between the upper and lower bounds of $A_q$ is, at most, 0.7\%, implying that the true accessible information is close to the Fuchs bounds.  As evident in the plot, there is a substantial difference between the $\chi_q$ and $A_q$. \\

\textit{Case 2}.\ $\chi_c$ and $A_c$: In the previous section, we assume that a heterodyne measurement is the optimal local Gaussian measurement to make on mode B. We demonstrate in Fig.\ \ref{fig_hetopt} that this is true for a typical choice of parameters. Since a heterodyne measurement on mode B collapses mode A into a distribution of coherent states, $\chi_c$ and $A_c$ were calculated by integrating the information quantities of single coherent probe ($\chi_s$, $A_s$) as as function of energy. The computed upper and lower bounds for $A_c$ are equal to within six significant figures. \\

\textit{Case 3}.\ $\chi_s$ and $A_s$: The Holevo information $\chi_s$ is plotted in Fig.\ \ref{fig_plot01}. Fuchs lower and upper bounds for $A_s$ were calculated and are equal to within seven significant figures, and are indistinguishable in Fig.\ \ref{fig_plot01}. Unlike case 1, when using a coherent state, the Holevo and accessible information differ by a small amount, only $0.4 \%$.\\

\begin{figure}
\includegraphics[width=8cm]{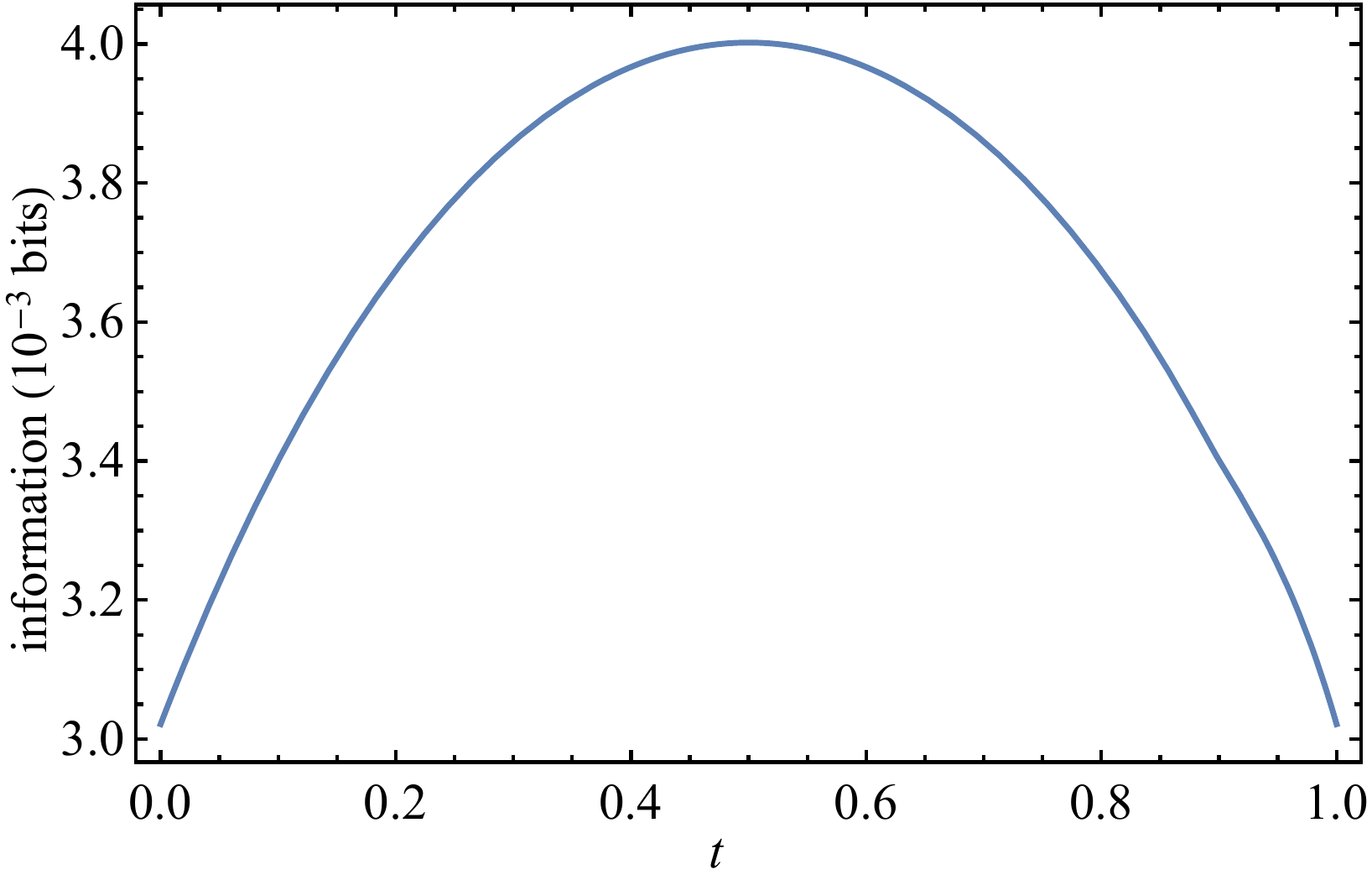}
\caption{The Holevo information obtained when an EPR state is used for illumination, but a local Gaussian measurement is first performed on mode B. The measurement consists of beam splitter with transmissivity $t$, followed by conjugate homodyne measurements on both outputs. The maximum gives $\chi_c$, which occurs up to numerical precision at $t=0.5$, which corresponds to a heterodyne measurement. Parameters are $\epsilon=0.1$, $\bar n=0.5$, and $\bar n_{\rm env}=4$. }
\label{fig_hetopt}
\end{figure}

\begin{figure*}
\includegraphics[width=15.5cm]{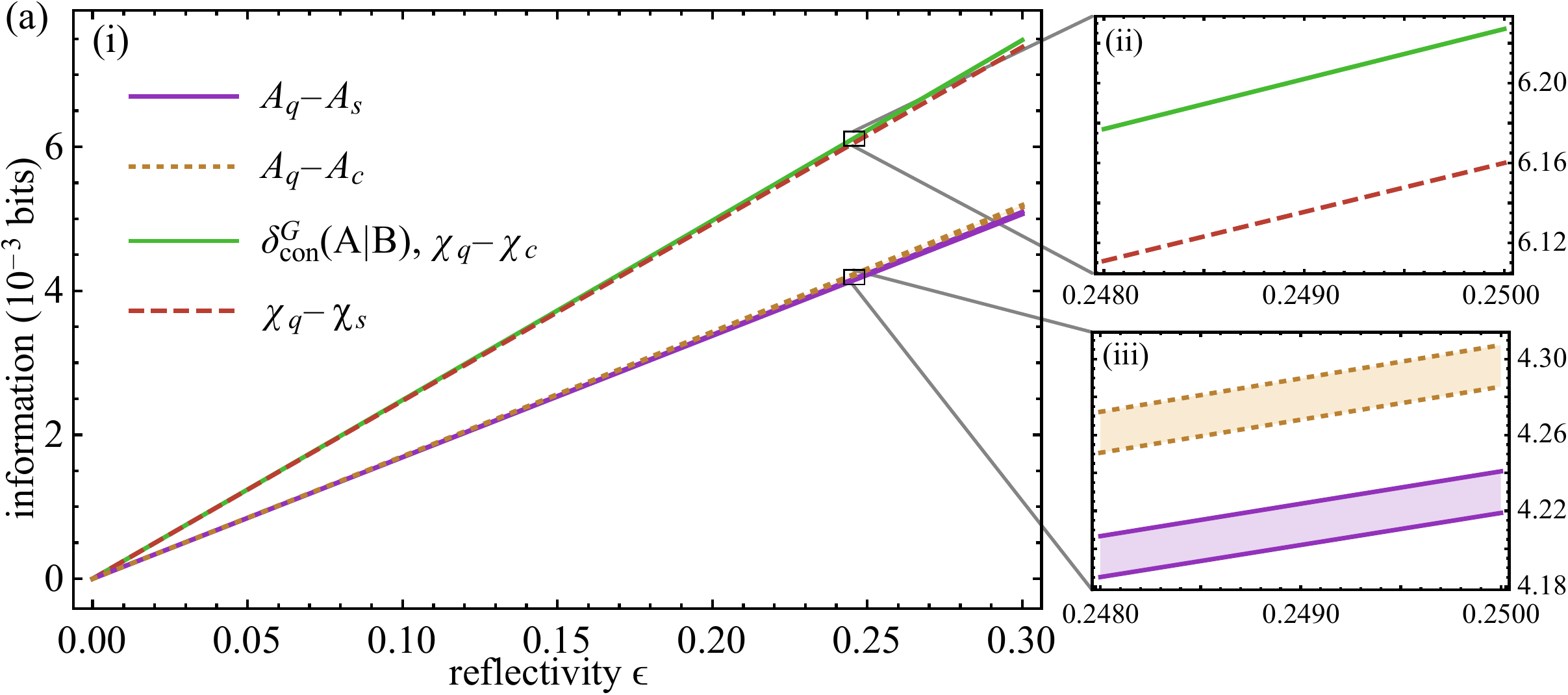}
\includegraphics[width=15.5cm]{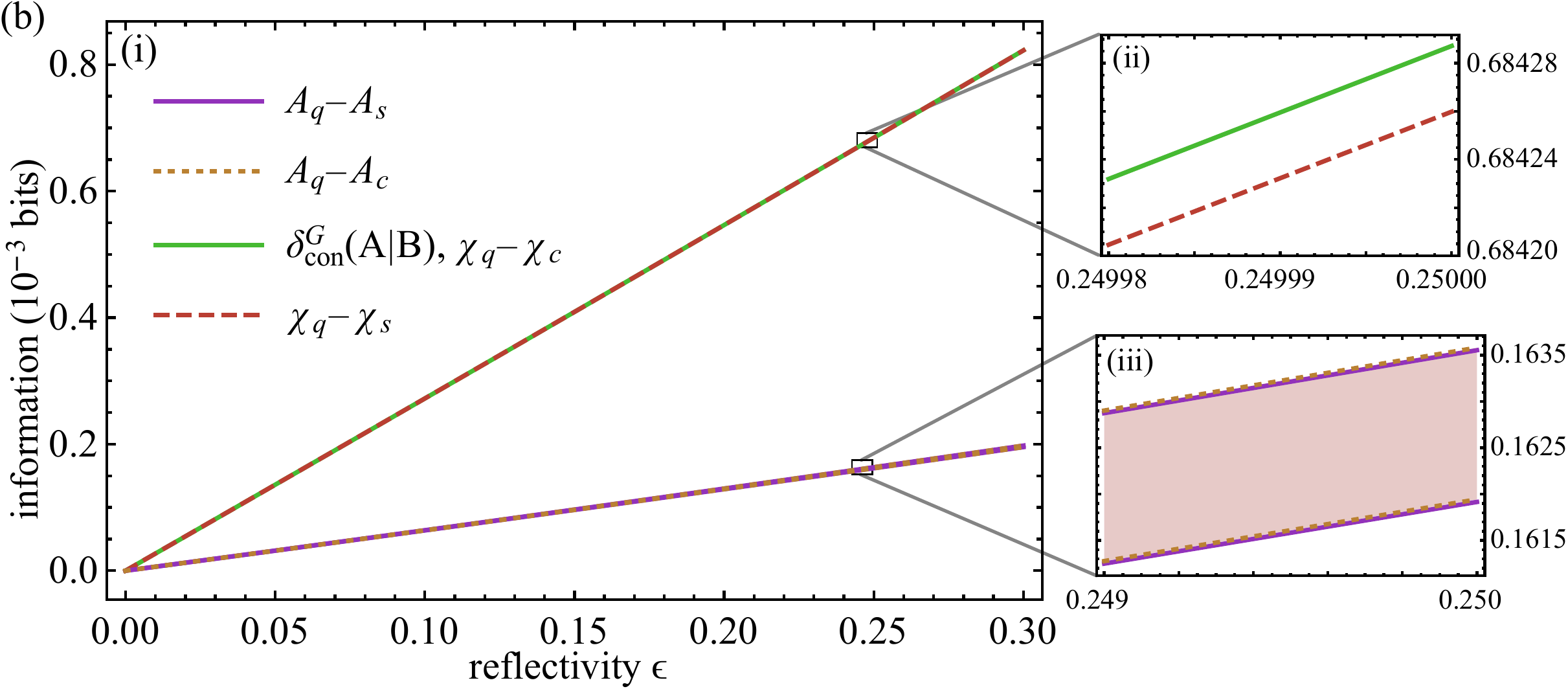}
\caption{The quantities $\chi_q-\chi_s$, $\chi_q-\chi_c$, $A_q-A_s$, and $A_q-A_c$, compared to the consumed Gaussian discord $\delta_{\rm con}^{\rm G}({\rm A}|{\rm B})$. The average photon number of the probe is (a) $0.5$ and (b) $0.01$. The mean photon number of the environment noise is $4$. Each plot has two insets showing zoomed portions. Insets (ii) show $\delta_{\rm con}^{\rm G}({\rm A}|{\rm B})$, $\chi_q-\chi_s$ and $\chi_q-\chi_c$. Insets (iii) shows upper and lower bounds of $A_q-A_s$ and $A_q-A_c$. }
\label{fig_plot02}
\end{figure*}

From Figs.\ \ref{fig_plot01}(a)(i), \ref{fig_plot01}(a)(iii), \ref{fig_plot01}(b)(i), and \ref{fig_plot01}(b)(iii), we see that $\chi_q$ is greater than $\chi_s$, and $A_q$ is greater than $A_s$, showing that quantum illumination with joint measurement does indeed have an advantage over single-mode illumination. In the communication context, Alice can communicate with Bob with a higher bit rate if Bob uses a probe entangled with an idler instead of a coherent state probe.

From Fig.\ \ref{fig_plot01}, we see that the performance of a coherent state probe is approximately equal to performance of an EPR probe when a local Gaussian measurement is performed on the mode B. However, $A_s$ is slightly higher than $A_c$ (and $\chi_s$ slightly higher than $\chi_c$) because $A_s$ is a concave function of energy (see Appendix \ref{appendix2}). By considering the ratio of $A_s$ and $A_c$, we find that their relative difference approaches zero in both the limits $\epsilon\to0$ and $\bar n\to0$. This indicates that there is no advantage to using an EPR state for illumination, over a coherent state probe, if a Gaussian measurement is first made on mode B of the EPR state. A local Gaussian measurement on mode B of an EPR state will cause mode A to collapse to a single-mode Gaussian state. Hence, this is equivalent to using a distribution of single-mode Gaussian states for the probe, which, under the masking of strong environmental noise, gives an approximately equal knowledge about a weakly reflecting object as using a single-mode coherent state probe.

\subsection{Relating Quantum Advantage to Discord Consumed}
\label{sec_results_discord}

To calculate the consumed discord $\delta_{\rm con}({\rm A}|{\rm B})$, we need to compute the discord of states $\rho^{(0)}$ and $\bar\rho$ when the entangled state $\rho_{\rm EPR}$ is used as probe and idler. $\rho^{(0)}$, the resulting state when Alice does nothing, is a Gaussian state whose discord is equal to the Gaussian discord, and additionally this discord is obtained when the measurement is a heterodyne measurement~\cite{PhysRevLett.113.140405}. The state after encoding $\bar\rho$, however, is not Gaussian, and thus the same rule does not apply. Unfortunately, calculating the discord of a general state is an NP-hard problem~\cite{1367-2630-16-3-033027}, so there is no method to calculate it efficiently. Here, we simplify the problem by restricting ourselves to Gaussian discord and calculate the consumed Gaussian discord $\delta^{\rm G}_{\rm con}({\rm A}|{\rm B})$ instead. This is just Eq.\ \eqref{eqn_denc} with the discords replaced with Gaussian discords. 

The Gaussian discord of state $\bar\rho$ was obtained by numerically optimizing Eq.\ \eqref{eq_discord} over Gaussian measurements. It was found that the optimal point occurs when the measurement is a heterodyne measurement. The two discord values $\delta^{G (0)}({\rm A}|{\rm B})$ and $\bar \delta^{\rm G}({\rm A}|{\rm B})$ are then substituted into Eq.\ \eqref{eqn_denc} to obtain the consumed Gaussian discord.

Due to the optimality of the Gaussian discord of state $\rho^{(0)}$, and the fact that Gaussian discord is an upper bound for the discord for state $\bar\rho$, the consumed Gaussian discord is a lower bound of the consumed discord, i.e., $\delta^{\rm G}_{\rm con}({\rm A}|{\rm B})\le\delta_{\rm con}({\rm A}|{\rm B})$. A plot of the $\delta^{\rm G}_{\rm con}({\rm A}|{\rm B})$ compared to the information differences is shown in Fig.\ \ref{fig_plot02}.

As discussed in Sec.\ \ref{sec_results_analytic}, since a heterodyne measurement on mode B optimizes $\delta^{(0)}({\rm A}|{\rm B})$, and numerical results show that this is the case for  $\bar \delta({\rm A}|{\rm B})$ and $\chi_c$, from Theorem \ref{theorem1}, $\delta^{(0)}({\rm A}|{\rm B})=\chi_q-\chi_c$. Numerical calculation of $\delta^{(0)}({\rm A}|{\rm B})$ and $\chi_q-\chi_c$ agree within the precision of the calculation, further verifying the theorem.

From Fig.\ \ref{fig_plot02}, we see that the difference in Holevo information between quantum illumination ($\chi_q-\chi_c$) and single-mode illumination ($\chi_q-\chi_s$) is $1.3\%$ for $\bar n=0.5$ and $0.005 \%$ for $\bar n=0.01$ when $\epsilon=0.3$. The percentage difference approaches zero when $\epsilon\to0$. Since $\delta^{\rm G}_{\rm con}({\rm A}|{\rm B})=\chi_q-\chi_c$, this leads us to the conclusion that in the limit of low reflectivity and low probe energy, $\chi_q-\chi_s$ converges to the Gaussian discord consumed.  Hence, discord encoded can suitably explain the quantum advantage of quantum illumination, if quantum illumination is viewed as a communication problem with access to devices such as quantum memory.

On the other hand, $A_q-A_s$, which quantifies the performance advantage for quantum illumination in the single-copy measurement case, is more relevant from a practical point of view since this does not require the storage of quantum states \cite{SHGSC16}. From Fig.\ \ref{fig_plot02}, we see that $\delta^{\rm G}_{\rm con}({\rm A}|{\rm B})$ is greater than $A_q-A_s$ and $A_q-A_c$. This discrepancy is mainly due to the difference between the Holevo information $\chi_q$ and the accessible information $A_q$ for the states involved in quantum illumination. Hence, measuring each illumination event separately does not fully harness the benefits offered by the discord. However, it is sufficient to provide some quantum advantage over single-mode illumination.

\subsection{Quantum advantage versus probe energy}

\begin{figure}
\includegraphics[width=8.6cm]{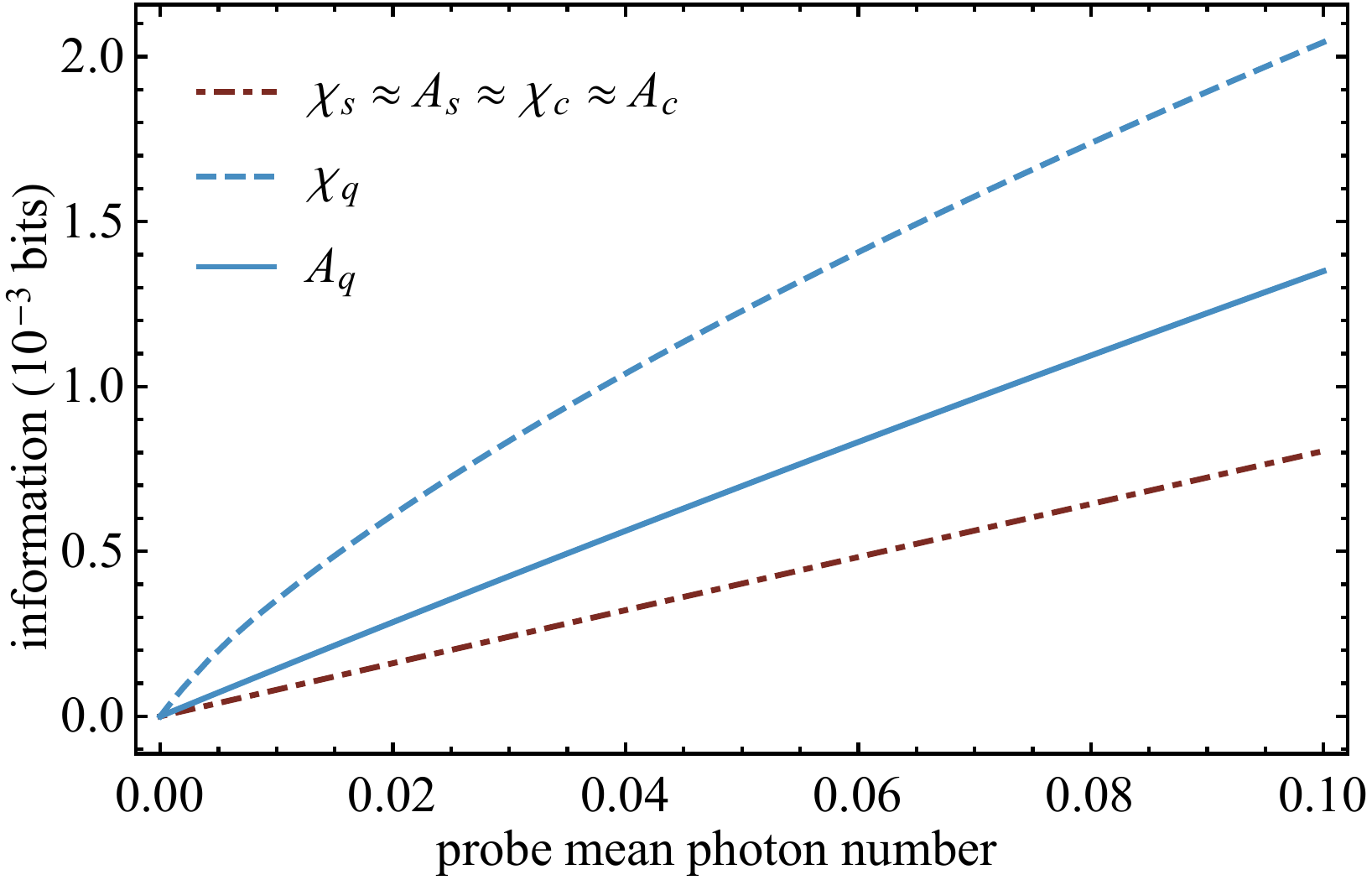}
\includegraphics[width=8.6cm]{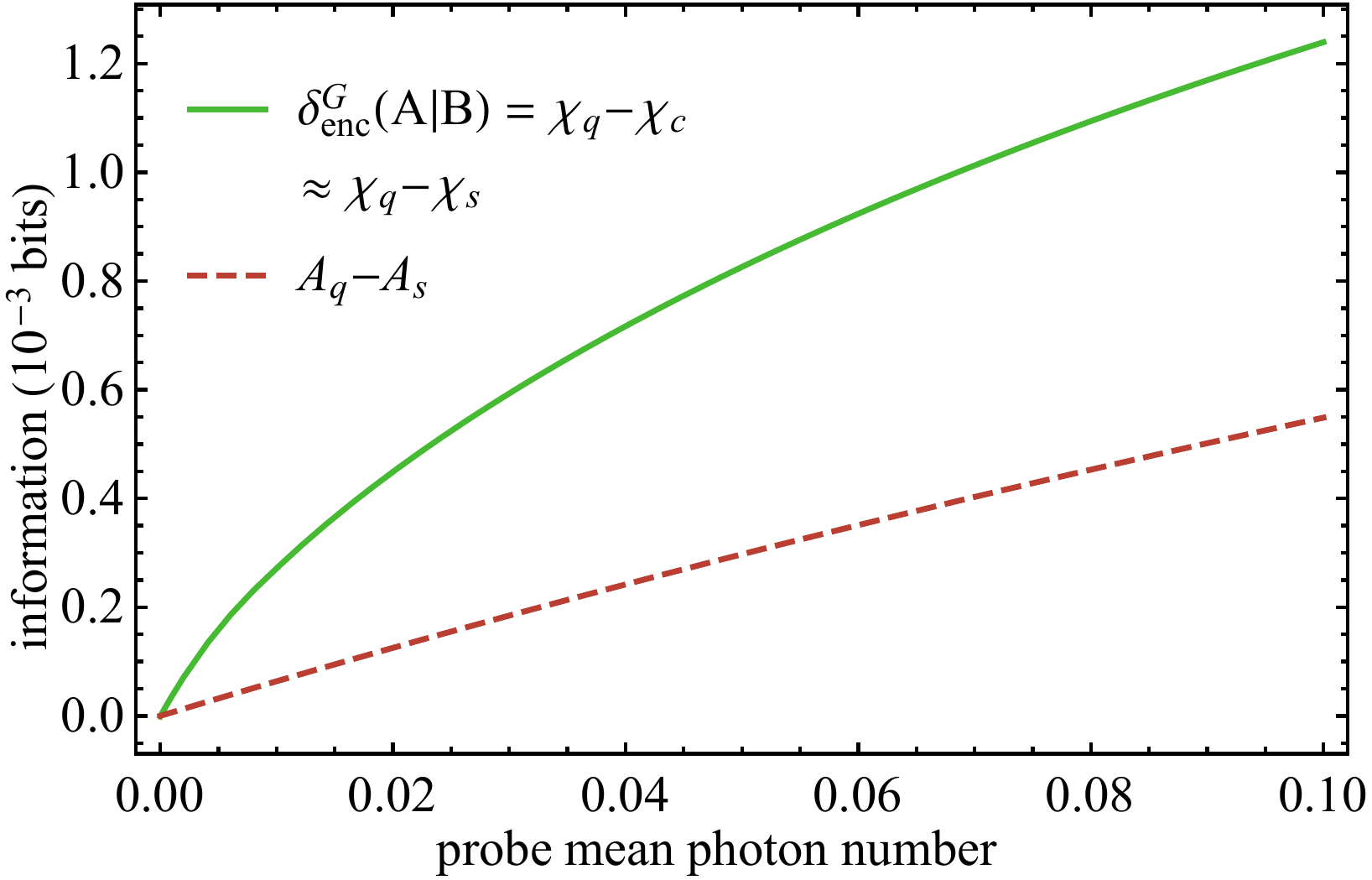}
\caption{ The accessible information and Holevo information quantities (top) and consumed Gaussian discord and quantum advantage (bottom) vs the mean energy of the probe. The environment-noise mean photon number is $4$ and object reflectivity $\epsilon=0.1$. }
\label{fig_plot_energy}
\end{figure}

There is nothing special about our choice of probe energies of 0.01 and 0.5 used in the previous sections. To demonstrate this, Fig.\ \ref{fig_plot_energy} shows the illumination performance, quantum advantage, and consumed Gaussian discord for probe mean photon numbers in the range 0 to 0.1, while the object reflectivity is kept constant at $0.1$. There is always a quantum advantage, and the consumed Gaussian discord is approximately equal to the quantum advantage in terms of Holevo information.

\subsection{Comparison to discrete variables}

\begin{table*}
\caption{Comparison of continuous-variable and discrete-variable illumination.}
\label{table1}
\begin{tabularx}{17cm}{|X|Sc|Sc|}
\hline
& {\bf DV} & {\bf CV}
\\ \hline
Environment noise &
\parbox{5cm}{
Maximally mixed \\
$\sum_{n=0}^N \frac{1}{N+1} \ket{n}\bra{n}$} &
\parbox{5cm}{
Thermal state \\
$\sum_{n=0}^\infty \frac{\bar n^n}{(\bar n +1)^{n+1}} \ket{n}\bra{n}$}
\\ \hline
Quantum illumination probe & \parbox{5cm}{Maximally entangled\\
$\sum_{n=0}^N \frac{1}{\sqrt{N+1}} \ket{n,n}$} &
\parbox{5cm}{
EPR \\
$\sqrt{1-\lambda^2}\sum_{n=0}^\infty (-\lambda)^n \ket{n,n}$}
\\ \hline
Accessible vs Holevo information &
\parbox{5cm}{
$\chi_q=A_q$ \\
no difference} &
\parbox{5cm}{
$\chi_q>A_q$ \\
big difference }
\\ \hline
Single-mode illumination &
\parbox{5cm}{
probe: any pure state $\ket{\psi}$ \\
$\chi_s=A_s$
} &
\parbox{5cm}{
probe: coherent state $\ket{\alpha}$ with $|\alpha|^2=\bar n$ \\
$\chi_s\approx A_s$
}
\\ \hline
Quantum vs single-mode illumination &
\parbox{5cm}{
$\chi_s<\chi_q$ \\
$A_s < A_q$} &
\parbox{5cm}{
$\chi_s<\chi_q$ \\
$A_s < A_q$}
\\ \hline
Single mode probe vs local measurement on idler first &
\parbox{5cm}{
$\chi_c=\chi_s$ \\
$A_c=A_s$} &
\parbox{5cm}{
$\chi_c\approx\chi_s$ \\
$A_c\approx A_s$ \\
approximation gets better with low probe energy and low reflectivity}
\\ \hline
Consumed discord vs Holevo quantum advantage &
\parbox{5cm}{
$\delta_{\rm con}=\chi_q-\chi_c$ \\
$=\chi_q-\chi_s$} &
\parbox{5cm}{
$\delta^{\rm G}_{\rm con}=\chi_q-\chi_c$ \\
$\approx\chi_q-\chi_s$}
\\ \hline
Consumed discord vs accessible info quantum advantage &
\parbox{5cm}{
$\delta_{\rm con}=A_q-A_c$ \\
$=A_q-A_s$} &
\parbox{5cm}{
$\delta^{\rm G}_{\rm con}>A_q-A_c\approx A_q-A_s$ }
\\ \hline
\end{tabularx}
\end{table*}

It is worth comparing continuous-variable (CV) illumination to discrete-variable (DV) illumination~\cite{weedbrook2016discord}. In discrete variables, the environmental noise is often described as white noise. This scenario is not realistic in continuous variables, as it corresponds to a thermal state at infinite temperature, and thus is of unbounded energy. Using a maximally mixed environment noise for DV illumination has the consequence that all pure state probes yield the same information for single-mode illumination. This is clearly not the case for any physically relevant cases of CV illumination, where a coherent state with a high energy generally performs better than a coherent state with low energy.

The probe used for quantum illumination for DV illumination is a maximally entangled state. Again, this state in CV illumination would have unbounded energy. A maximally entangled probe and idler, and a maximally mixed environment, mean that $\rho^{(0)}$ and $\rho^{(1)}$ commute in DV illumination. Hence, the Holevo information and accessible information are equal. This is not the case for CV illumination. From Fig.\ \ref{fig_plot01}, we see the differences between $A_q$ and $\chi_q$ can be significant, though deviations between $A_c$ and $\chi_c$ remain small. Quantum advantage, though, remains significant for both Holevo and accessible information.

In DV illumination, performing a local measurement on the idler first, followed by a local measurement on the probe, yields identical information as single-mode illumination. For CV illumination, this is only approximately true; these two quantities approach equality in the limit of low reflectivity and low probe energy.

Finally, in DV illumination, the consumed discord is exactly equal to the Holevo information quantum advantage and the accessible information quantum advantage. We found, for CV illumination, that this approximately holds for the Holevo information, but not for the accessible information. The differences between DV and CV illumination are summarized in Table \ref{table1}.

\section{Conclusion}
\label{sec_conclusion}

\begin{figure}[b]
\includegraphics[width=8.6cm]{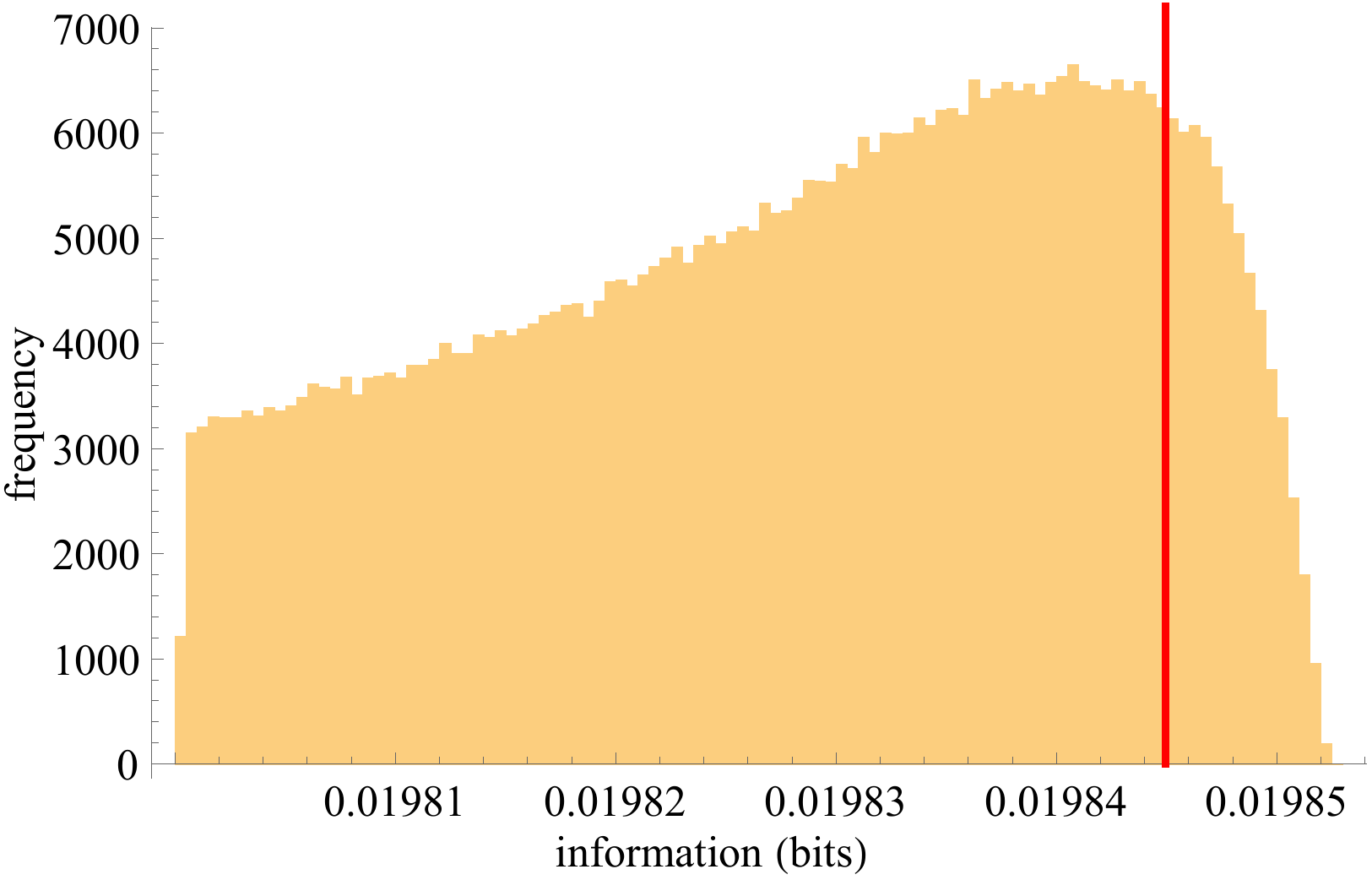}
\caption{ One million small perturbations were made on a coherent state. This is a histogram of the Holevo information when the top 500 000 states are used in single-mode illumination. The vertical line shows the Holevo information of the coherent state. Parameters are $n_{\rm env}=4$, $\epsilon=0.5$, $\bar n=0.5$. }
\label{fig_perturbed}
\end{figure}

In \cite{weedbrook2016discord}, it has been shown that quantum discord coincides exactly with quantum advantage in a DV quantum illumination. Here, we complete the picture by extending the framework to CV quantum illumination \cite{PhysRevLett.101.253601}. To this end, we numerically calculated the performance enhancement that quantum illumination has over single-mode illumination and compared it to the Gaussian discord of the system. We derived an analytic result showing that $\delta^{\rm G}_{\rm con}({\rm A}|{\rm B})=\chi_q-\chi_c$, provided condition 3 of Theorem \ref{theorem1} is met. Our main result is that the quantum advantage in terms of Holevo information matches the consumed discord in the limit of low probe energy and low object reflectivity ($\bar n\to0$ and $\epsilon\to0$). This is in agreement with the DV counterpart, which analogously assumes a maximally entropic illumination environment.

Several remarks in relation to other works are in order. In deriving our results, we have demonstrated that a joint measurement over the returning probe and idler is necessary to exploit the surviving quantum correlation to determine the non-unitary encoding. Similar to \cite{Gu:2012aa}, a coherent interaction is required to unlock the information encoded via unitary discord consumption. The discrepancy between the quantum advantage offered by Holevo information and accessible information is in concordance with recent findings, where the improvement of error probability of quantum illumination over single-mode illumination is limited to 3 dB (out of a maximum gain of 6 dB) for single-copy separate measurement in the intense white-noise limit \cite{PhysRevA.80.052310,SHGSC16}.

We note other efforts in quantifying the source of enhancement in quantum-illumination like protocols. In \cite{ragy2014quantifying}, mutual information is used to quantify the advantage offered by an entangled source over a correlated thermal source. Gaussian discriminating strength is proposed to distinguish the absence or presence of a set of unitary operations in \cite{farace2014discriminating,rigovacca2015gaussian}. The role of correlation in the improvement of channel loss detection is also established by linking discord to the performance numerically \cite{invernizzi2011optimal}. Meanwhile, several other cryptographic and metrological variants of illumination has been proposed and demonstrated recently \cite{PhysRevA.80.022320,PhysRevLett.111.010501}, in which we envisage our framework would shed light in understanding the discord's role in their quantum enhancement. 

\section*{Acknowledgements} 

\begin{figure}[b]
\includegraphics[width=8.6cm]{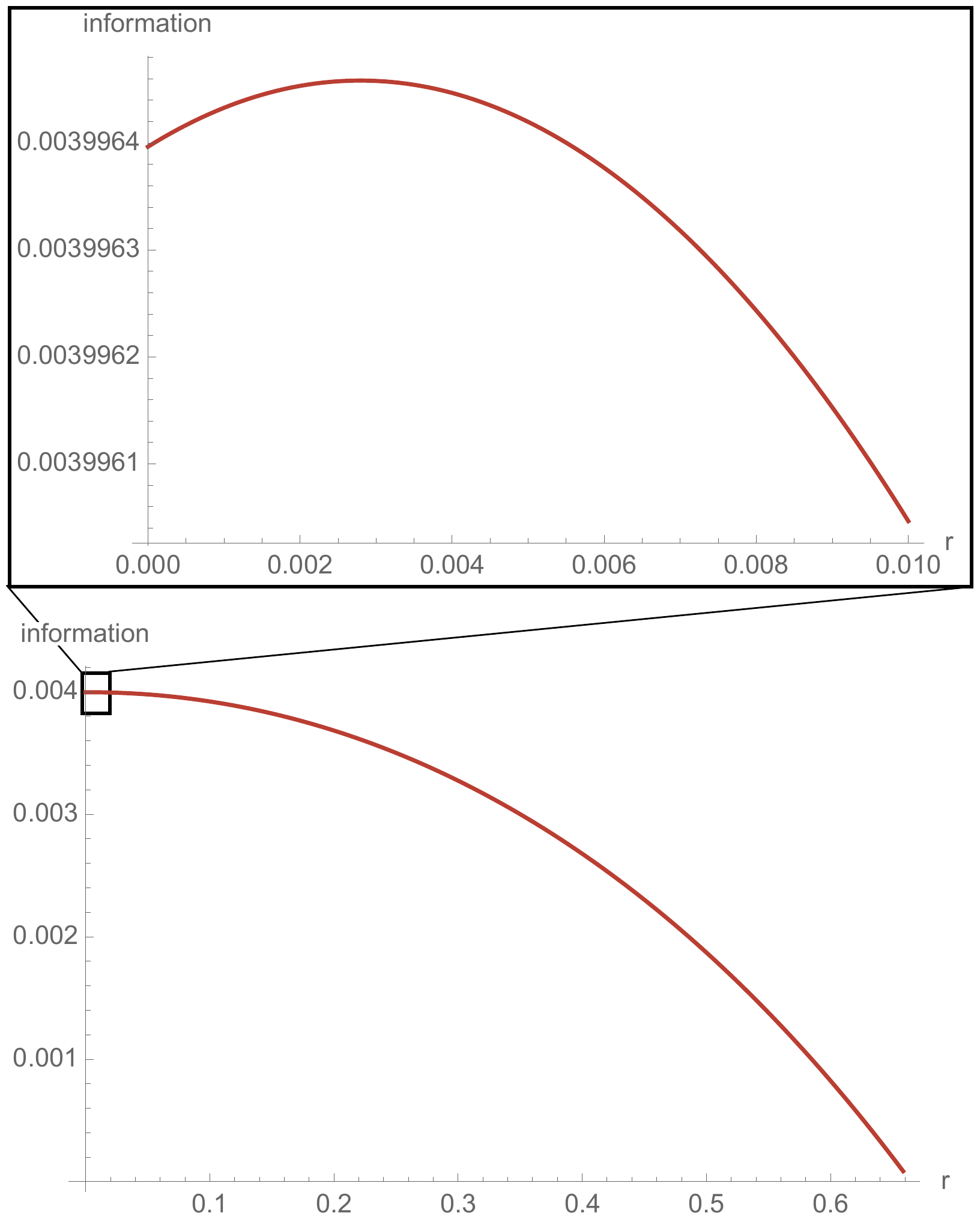}
\caption{ Accessible information when the probe is a squeezed coherent state with mean photon number $\bar n=0.5$, squeezing $r$ and displacement $\alpha=\sqrt{\bar n-\sinh^2(r)}$. The object has reflectivity $\epsilon=0.1$ and the mean photon number of the noise $\bar n_{\rm env}$ is $4$. The optimal squeezing is $r=0.00279$, which gives information $0.0015\%$ higher than with no squeezing. }
\label{fig_plotsq}
\end{figure}

We are grateful for funding from the National Research Foundation of Singapore (NRF Award No. NRF–NRFF2016–02), the John Templeton Foundation Grant No.\ 53914 {\em ``Occam's Quantum Mechanical Razor: Can Quantum theory admit the Simplest Understanding of Reality?''}, the Foundational Questions Institute, and the Australian Research Council Centre of Excellence for Quantum Computation and Communication Technology (Project No.\ CE110001027). This material is based on research supported in part by the National Research Foundation of Singapore under NRF Award No. NRF-NRFF2013-01. S.T.\ acknowledges support from the AFOSR under Grant No.\ FA2386-15-1-4082.

\begin{figure}[b]
\includegraphics[width=8.6cm]{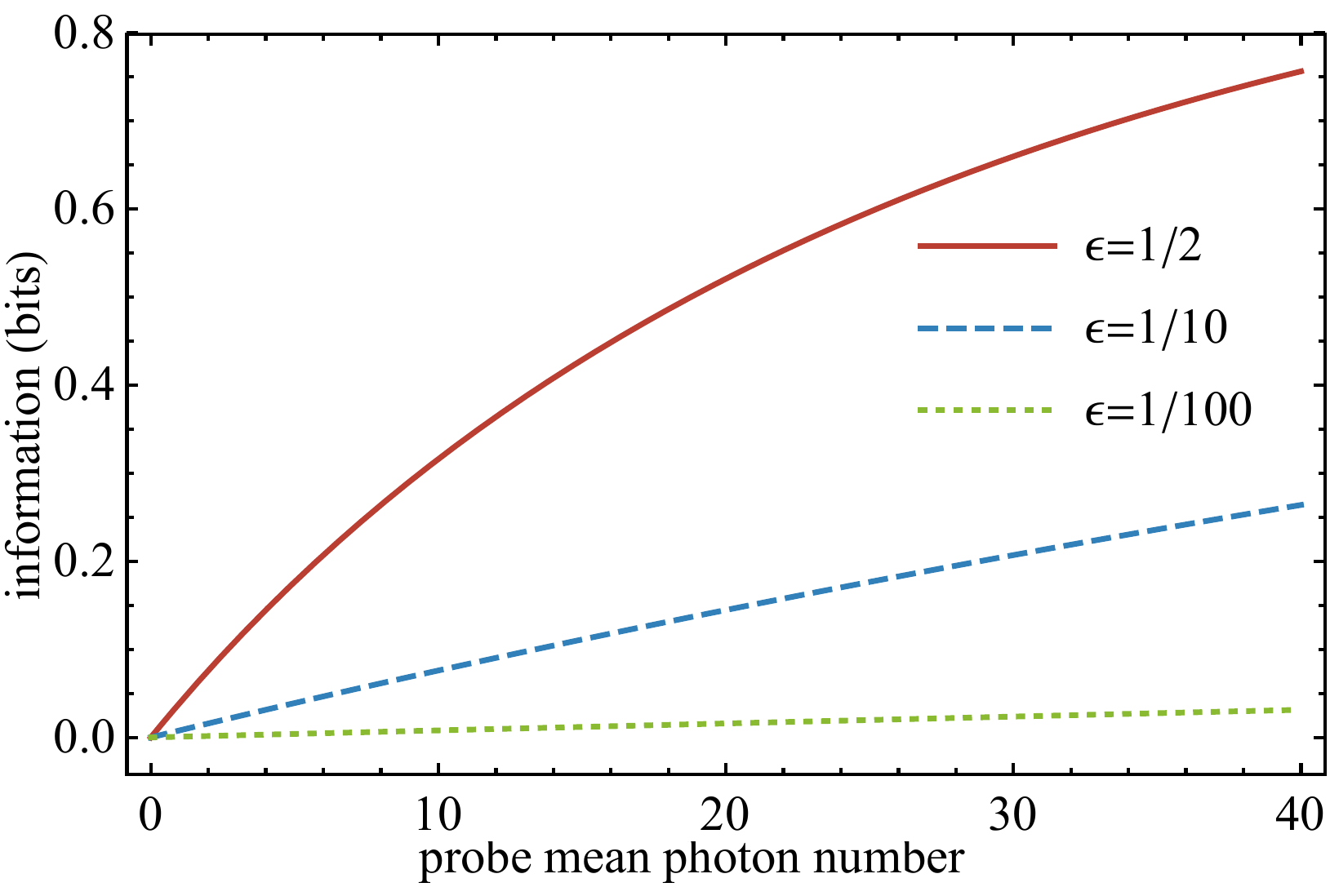}
\includegraphics[width=8.6cm]{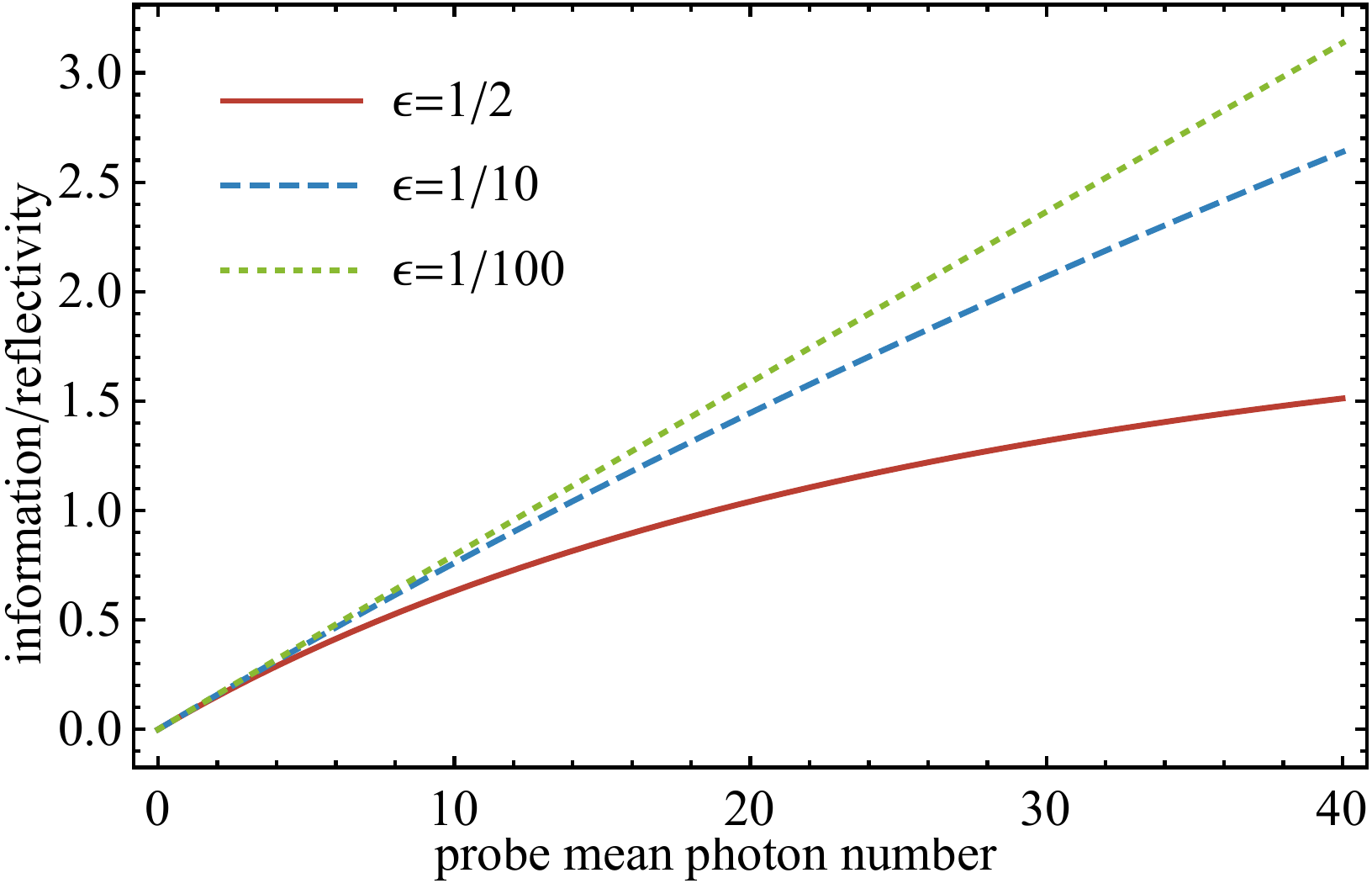}
\caption{Lower bound of $A_s$ as a function of probe average photon number $\bar n=|\alpha|^2$ when the noise mean photon number $\bar n_{\rm env}=4$, and the object has reflectivity $\epsilon$ of $1/2$, $1/10$ and $1/100$ (top). The same plot with each line scaled by $1/\epsilon$ so that linearity can be compared (bottom). }
\label{fig_plotic}
\end{figure}

\appendix

\section{Suboptimality of coherent state probe}
\label{appendix1}

A coherent state is not the optimal state to use for single-mode illumination. Small perturbations were made on a coherent state, such that the mean photon number was maintained. Figure \ref{fig_perturbed} shows a histogram of the Holevo information when the perturbed states were used in illumination. Some of the perturbed states resulted in a Holevo information greater than that achieved with the coherent state. Hence, a coherent state is not the optimal probe to use in single-mode illumination. However, we hypothesize that it is close to optimal. The problem of finding the optimal probe is too difficult to calculate, so this hypothesis is difficult to prove.

If the probe is restricted to a Gaussian state, as in Gaussian single-mode illumination, the coherent state is still not optimal. Using a squeezed coherent state with a tiny squeezing can result in increased accessible information (as can be seen in Fig.\ \ref{fig_plotsq}), but the improvement is negligible. Hence, a coherent state is approximately optimal for Gaussian single-mode illumination.

\section{Calculating $\chi_c$ and $A_c$ from integration of $\chi_s$ and $A_s$}
\label{appendix2}

From Fig.\ \ref{fig_plotic}, we see that $A_s$ is a concave function of energy. If $A_s$ were a perfect linear function of energy, $A_c$ and $A_s$ would be equal. As can be seen from Fig.\ \ref{fig_plotic}, $A_s$ as a function of energy becomes more linear as the $\epsilon$ approaches zero. Hence, this suggests that $A_c$ and $A_s$ become equal as $\epsilon$ approaches zero. The same applies to $\chi_c$ and $\chi_s$.

\bibliography{bibliography}

\end{document}